\pgfplotsset{compat=newest}
\newtheorem{theorem}{Theorem}
\renewcommand{\ref}[1]{(\oldref{#1})}
\begin{document}

\title{Age of Information With Finite Horizon and Partial Updates}
\name{David~Ramirez$^{\star}$ \qquad Elza~Erkip$^{\dagger}$ \qquad H. Vincent Poor$^{\star}$\thanks{Work partially done while Ramirez was at New York University.}}

			\address{$^{\star}$Department of Electrical Engineering, Princeton University, 
			    $^{\dagger}$New York University}

\maketitle

\begin{abstract}
A resource-constrained system monitors a source of information by requesting a finite number of updates subject to random transmission delays. An a priori fixed update request policy is shown to minimize a polynomial penalty function of the age of information over arbitrary time horizons. Partial updates, compressed updates with reduced transmission and information content, in the presented model are shown to incur an age penalty independent of the compression. Finite horizons are shown to have better performance in terms of second order statistic relative to infinite horizons.
\end{abstract}

\begin{keywords}
Age of information, remote sensing, wireless communication
\end{keywords}

\section{Introduction}
Tactile internet is on the horizon for wireless network services, if highly responsive connectivity between environment-facing sensors and user-facing systems is available \cite{simsek20165g}.
Sensor updates must be sparse enough to extend battery life yet frequent enough to satisfy a given system performance requirement. A pressing question, and the focus of this work, is to find an update request policy that gracefully maintains system performance of a resource constrained system.

When actions depend on the most recently sensed data, system performance degrades as the elapsed time since the last received update grows. The concept of age of information (AoI), as presented by Kaul et al. \cite{kaul2011minimizing}, quantifies the system performance penalty associated with the delay between received updates. AoI is a meaningful metric which differs from delay and throughput by focusing on the time between update generation and update arrival \cite{kadota2016minimizing}, i.e., AoI is a receiver-centric measure of delay. Achieving minimal AoI can be nontrivial, e.g., \cite{sun2017update} showed constraining the number of updates makes a zero-wait policy sub-optimal in terms of AoI. 

Results in \cite{chen2016age} characterized the impact of various queuing strategies on the peak AoI. In \cite{ceran2017average} it was shown that an AoI threshold strategy minimizes the average AoI under long-term average update constraints and error-correction techniques. Work in \cite{hsu2017age} showed that with adequate queuing strategies the impact of buffering capacity of intermediary relays becomes nearly negligible. Non-linear AoI penalties for a randomly update generating source were considered in \cite{kosta2017age}. A coding-inspired approach where updates vary in information quantity and bit size was presented in \cite{poojary2017real}. A system monitoring a random source with a finite number of updates was considered in \cite{chakravorty2015distortion} where a threshold based strategy optimally minimized the expected estimation error. A controllable update processing time was considered in \cite{bastopcu2019age} to minimize AoI constrained to a distortion metric inversely proportional to the processing time of updates sent over a deterministic channel. Mutual information (MI) of a source and destination was proposed as a measure for AoI in \cite{sun2018information}, where an MI thresholding strategy was shown to be optimal. Recently, \cite{ayan2019age} considered a central scheduler tasked with minimizing the total AoI by controlling how updates are relayed from sources to destinations.

The present work is distinguished via the use of a model that allows for new update requests to be made even if a previous update has not yet arrived, operates over arbitrary time horizons, and considers the use of partial updates. The AoI penalty is a polynomial function of arbitrary degree and finite first and second moments are assumed computable for the random delay. Minimization of the total AoI penalty is obtained by a priori computing update request times. The impact on AoI penalty of reordering update requests is presented. Additionally, it is shown that a finite horizon approach achieves smaller variance of AoI penalty, thus greater stability can be provided to systems with high reliability requirements.

\section{Network Model}
A monitoring system requests information updates via wireless links from a source of information (e.g., a set of sensors). The system is limited to making at most $N\in \mathbb{Z}^+$ update requests over a time period $T\in \mathbb{R}^+$. Note that $\frac{N}{T}$ can serve as an average update constraint or a constraint on energy if an energy per update ratio is introduced.

Define $i \in \{1,...,N\}$ as a labeling of the $N$ updates, $t:0\leq t \leq T, t \in \mathbb{R}$ as a time index, $\delta_i \in [0,T]$ as the departure time of the $i$-th update request. \emph{W.l.o.g.}, labels are such that $\delta_i \leq \delta_j~\forall~ i<j \in \{1,...,N\}$. Define $d_i$ as the transmission time of update $i$, modeled as a continuous strictly positive random variable with finite first and second moments. Statistics of all transmission times are assumed independent from each other, but not identical to allow the model to account for the $N$ updates to arrive from potentially $N$ distinct sensors. Define $\alpha_i = \delta_i + d_i$ as the arrival time of update $i$.

To account for, either, multiple paths between system and sensors or higher layer mechanisms giving sequential requests different delays (e.g., exponential backoff in $802.11$), assume new update requests can be made even if a previous update has not yet arrived. At time $t$, define the \emph{freshest} update $u(t)$ as $\arg \max_{i} \delta_i:\alpha_i\leq t$ with $u(t)=0 ~\forall~t<\alpha_1$, i.e., the freshest update is a received update that was generated most recently. Update $i$ is said to be useful if $\exists~t : i=u(t)$ else the update is wasted. 

Define $\boldsymbol{\delta}=\{\delta_i~\forall~i\}$ as a policy, and note that the system only controls what policy is used. Policy selection impacts the system penalty function, defined as a function which is discontinuous at reception of useful updates and otherwise monotonically increasing with time.
Related literature names this penalty function the \emph{age of information} and is defined as
\begin{equation}
A_k(t, \boldsymbol{\delta}) = C_k (t - \alpha_{u(t)} + d_{u(t)})^k 
\end{equation}
\noindent with $k \geq 1$, $C_k \in \mathbb{R}^+$ as a scaling value, $A_0 \in \mathbb{R}^+$ as an initial age, and $d_0 = \left(\frac{A_0}{C_k}\right)^{1/k}$. Define the set of useful updates $\boldsymbol{\alpha}^u = \{\alpha_j~\forall~j:~\exists~t,u(t)=j\} \cup \{a^u_0=0, a^u_{N+1}=T \}$ and the total penalty can be defined as 

\begin{equation}
\mathcal{S}_k(\boldsymbol{\delta})= \sum_{i=0}^{N} \int_{\alpha^u_i}^{\alpha^u_{i+1}} A_k (t, \boldsymbol{\delta}) dt,
\label{totalpenalty}
\end{equation}
\noindent where $A_k(t, \boldsymbol{\delta})$ is evaluated from inside the integration period. Note that $\mathcal{S}_k(\boldsymbol{\delta})$ is a random variable.

If a policy $\boldsymbol{\delta}$ achieves $\mathcal{S}_k(\boldsymbol{\delta})$ with $N$ updates, then there exists a policy with $N+1$ updates with no larger a total penalty (e.g., the same $N$ departure times and random departure time of the extra update). Furthermore, if a policy with $N+1$ \emph{useful} updates exists, then a lower total penalty can be guaranteed. The problem of interest is to obtain an optimal policy, i.e.,

\begin{equation}
\boldsymbol{\delta}^* =  \arg\min~E[\mathcal{S}_k(\boldsymbol{\delta})].
\label{OptProbL}
\end{equation}
\noindent Note, an optimal policy has a penalty no greater than any other policy when averaged over \emph{many} instances of length $T$. 

\section{Optimal Solution: Critical Age Policy}
First, a set of feasible solutions for (\ref{OptProbL}) for a given limited expected peak age is defined. Then, a proof for the existence of an optimal solution and method to find an optimal policy are presented.

To account for discontinuities due to update reception in ($1$), define $f^-(t)=\lim\limits_{t'\to t^-}f(t')$ and $f^+(t)=\lim\limits_{t'\to t^+}f(t')$ for some function $f(t)$. Define an arbitrary limit on the expected peak age as the critical age $A^C$. For a critical age $A^C > A^-_k(E[d_1],\boldsymbol{\delta})$, the critical policy is defined as 
\begin{equation}
\boldsymbol{\delta}^C=\{\delta_i^C : E[A^-_k(t,\boldsymbol{\delta}^C)] \leq A^C~\forall~t\} ,
\end{equation}
\noindent else if $A^C \leq A^-_k(E[d_1],\boldsymbol{\delta})$, the critical policy $\boldsymbol{\delta}^C$ is 
\begin{equation}
\boldsymbol{\delta}^C=\{\delta_1^C=0,~
\delta_i^C : E[A^-_k(t,\boldsymbol{\delta}^C)] \leq A^C~\forall~t > \alpha_1\},
\label{2daDef}
\end{equation}
\noindent where (\ref{2daDef}) sets $\delta_1^C=0$ to mitigate $A_0$ as quickly as possible since $A_k$ is smaller than the age at the earliest expected arrival of the first update (i.e., if $\delta_1=0$ then $E[\alpha_1]=E[d_1]$). 

For a given $N$, define the set of all critical policies as $\mathcal{P}_N$. A critical age of $0< A^C < \max_i E[d_i]$ cannot be guaranteed by any critical policy. Thus, among all policies in $\mathcal{P}_N$ there exists a minimal $A^C$. Define $\boldsymbol{\delta}^* := \arg \min_{\boldsymbol{\delta} \in \mathcal{P}_N} A^C$ as the critical policy with smallest critical age $A^*:=\min_{\boldsymbol{\delta} \in \mathcal{P}_N} A^C$. Theorem $1$ asserts that a critical policy solves (\ref{OptProbL}).

\begin{theorem}
A critical policy $\boldsymbol{\delta}^* \in \mathcal{P}_N$ obtains an average total penalty no greater than any other policy.
\end{theorem}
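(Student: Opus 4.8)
The plan is to rewrite $E[\mathcal{S}_k(\boldsymbol{\delta})]$ as a convex function of the departure times and then verify that its minimiser is precisely the critical policy of least critical age. Since $A_k(t,\boldsymbol{\delta})\ge 0$, Tonelli's theorem gives $E[\mathcal{S}_k(\boldsymbol{\delta})]=\int_0^T E[A_k(t,\boldsymbol{\delta})]\,dt$, and the piecewise form \ref{totalpenalty} agrees with this integral because the set of reception instants is null. I would then reduce to policies whose $N$ requests are all useful and whose useful arrivals occur in label order: a wasted request can be re-issued so as to become useful without raising the penalty (cf.\ the remarks preceding \ref{OptProbL}), and a misordered arrival can only be repaired by relabelling. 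For such a policy the age on the $i$-th inter-arrival interval is $C_k(t-\delta_i)^k$, with the initial age absorbed by the phantom update $\delta_0:=-d_0$ (so that $\alpha^u_i-\delta_i=d_i$ for every $i$); integrating term by term,
\begin{equation}
\mathcal{S}_k(\boldsymbol{\delta})=\frac{C_k}{k+1}\Big(\sum_{i=0}^{N}h_i^{\,k+1}-\sum_{i=0}^{N}d_i^{\,k+1}\Big),\qquad h_i:=\alpha^u_{i+1}-\delta_i,
\end{equation}
and a telescoping computation yields the key identity $\sum_{i=0}^{N}h_i=T+\sum_{i=0}^{N}d_i$.

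Taking expectations, $\sum_i E[d_i^{\,k+1}]$ is independent of $\boldsymbol{\delta}$ and $\sum_i E[h_i]=T+\sum_i E[d_i]$ is fixed, so \ref{OptProbL} becomes: minimise $\sum_{i=0}^{N}E[h_i^{\,k+1}]$ over the compact set $\{0\le\delta_1\le\cdots\le\delta_N\le T\}$. Each $h_i$ is affine in the $\delta_j$ and non-negative there, and $x\mapsto x^{k+1}$ is convex for $k\ge 1$, so every $E[h_i^{\,k+1}]$ is convex; hence the reduced problem is a convex program, a minimiser exists by Weierstrass, and its first-order conditions equate the partial derivatives, which in $\delta_j$ evaluate to $(k+1)\big(E[h_{j-1}^{\,k}]-E[h_j^{\,k}]\big)$. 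At the optimum the expected peak ages $C_kE[h_i^{\,k}]$ of the successive intervals are therefore all equal, unless the bound $\delta_1\ge 0$ is active; it is active exactly when $A_0$ is so large that the unconstrained stationary point would want $\delta_1<0$, which is precisely the dichotomy between \ref{2daDef} and the definition just before it. A policy that levels the expected peak ages (with a clamped first interval in the large-$A_0$ case) is exactly the element of $\mathcal{P}_N$ whose common critical age is the smallest attainable with $N$ requests, i.e.\ $\boldsymbol{\delta}^*$; being the minimiser of the convex relaxation, it solves \ref{OptProbL}.

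The step I expect to be the main obstacle is this last identification: confirming that the construction ``delay each request until $E[A_k^-(t,\boldsymbol{\delta}^C)]$ would reach $A^C$'', specialised to the smallest feasible $A^C=A^*$, indeed produces intervals with equal expected peak ages and the correct clamped first interval. Making this precise calls for a description of the deterministic curve $t\mapsto E[A_k^-(t,\boldsymbol{\delta})]=C_k\sum_j (t-\delta_j)^k\,P[u(t)=j]$, identification of its suprema, and use of the monotone dependence of $\boldsymbol{\delta}^C$ on $A^C$, so that ``smallest feasible $A^C$'' and ``levelled peaks'' coincide. A secondary difficulty is making the reduction to useful, in-order requests fully rigorous when arrivals are random and possibly out of label order, and dispatching the corner case in which the levelling recursion would demand $\delta_{i+1}<\delta_i$ — implicitly excluded by assuming the update budget is loose enough to keep all gaps non-negative.
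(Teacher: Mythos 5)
Your route is genuinely different from the paper's. The paper argues by exchange: for any competing policy it isolates the temporal shifts $\Delta_m$ and shows that each shift displaces a block of penalty of duration $|\Delta_m|$ from below the critical level $A^C$ (where $\boldsymbol{\delta}^*$ keeps it, by definition of a critical policy) to above it (by monotonicity of the age). You instead set up a convex program over the ordered simplex and read the optimality conditions. Your telescoping identity $\sum_i h_i = T+\sum_i d_i$ and the convexity of $\sum_i E[h_i^{\,k+1}]$ are correct, and this gives a cleaner existence argument than the paper offers. However, there are two genuine gaps. First, the decomposition $\mathcal{S}_k=\tfrac{C_k}{k+1}\bigl(\sum_i h_i^{\,k+1}-\sum_i d_i^{\,k+1}\bigr)$ holds only on the event that every update is useful and arrivals occur in label order. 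Because the $d_i$ are continuous random variables, the complementary event (a later request overtaking an earlier one, so the earlier arrival never resets the age) has positive probability for essentially any policy, and it cannot be removed by ``re-issuing'' a request: wastedness is a property of the realization, not of $\boldsymbol{\delta}$. The paper's own average-penalty formula makes the same idealization, but in your argument this identity is load-bearing, since the entire convex program is built on it; it must be stated as a modelling assumption (e.g.\ effectively ordered arrivals) rather than presented as a reduction.

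Second, and more seriously, the identification step you defer is not merely unfinished --- it can fail. Your stationarity conditions level the quantities $E[h_i^{\,k}]=E[(\delta_{i+1}+d_{i+1}-\delta_i)^{k}]$, the expected age sampled at the random arrival instants, whereas the critical policy is defined by bounding $E[A_k^-(t,\boldsymbol{\delta})]$ at deterministic times $t$, and the paper's working characterization (its equation for $A^*$ and the $\delta_i^*$) levels $C_k(\delta_{i+1}+E[d_{i+1}]-\delta_i)^{k}$, with the expectation inside the power. For $k=1$ the two coincide and your proof closes. For $k\ge 2$ they differ by policy-dependent terms such as $\binom{k}{2}(\delta_{i+1}-\delta_i)^{k-2}\,\mathrm{Var}(d_{i+1})$, so the minimizer of your convex program is in general not the policy the paper calls $\boldsymbol{\delta}^*$. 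To repair this you would need either to restrict to $k=1$, or to redefine the critical age in terms of $E[h_i^{\,k}]$ (arguably the more faithful object for the expected total penalty) and re-derive the characterization of $\mathcal{P}_N$ accordingly.
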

\begin{proof}
Define a policy that differs from $\boldsymbol{\delta}^*$ in at least one element, i.e., $\boldsymbol{\delta}':\exists~ \delta'_i \neq \delta^*_i$. Furthermore, define a subset of all the differing elements as $\boldsymbol{M}=\{m:\delta'_m \neq \delta^*_m, 1 \leq m \leq N\}$. Assume that $\boldsymbol{\delta}'$ and $\boldsymbol{\delta}^*$ have equal numbers of useful updates, else $\boldsymbol{\delta}'$ can be improved upon. Define $\Delta_m = (\delta'_m - \delta'_{m-1}) - (\delta^*_m - \delta^*_{m-1})~\forall~m \in \boldsymbol{M}$, with $\delta'_0=\delta^*_0=0$ for completeness, as the temporal shift between the policies for consecutive departures of updates in the set $\boldsymbol{M}$. 

Delay, i.e. $d_i~\forall~i$, is a policy independent stochastic process, thus $A_k^+(\alpha^*_m,\boldsymbol{\delta}^*)=A_k^+(\alpha'_m,\boldsymbol{\delta}')$ and the penalty incurred by $\boldsymbol{\delta}'$ for $t \in [\alpha'_m, \min(\alpha^*_{m+1}-\Delta_m, \alpha'_{m+1})]$ is equally incurred by policy $\boldsymbol{\delta}^*$ for $t \in [\alpha^*_m, \min(\alpha^*_{m+1}, \alpha'_{m+1}-\Delta_m)]$. 

Focus on the penalty incurred during the remaining time period of length $|\Delta_m|$. First, if $|\Delta_m|=0$ then the impact of update $m$ for policy $\boldsymbol{\delta}'$ is equivalent and time shifted version of the impact of update $m$ for policy $\boldsymbol{\delta}^*$. Note that $\exists~m : |\Delta_m|\neq 0$ since both policies have an equal and finite number of useful updates and $T$ is finite. Note that the time period of length $|\Delta_m|$ need not be a continuous subset of the time period $[0,T]$. 

Second, if $\delta^*_1 \neq 0$, then over \emph{any} time period of length $|\Delta_m|$ the policy $\boldsymbol{\delta}^*$ incurs a total penalty no greater than $|\Delta_m|A^C$. Alternatively, if $\delta_1^*=0$ the total penalty incurred by both policies over time period $[0,\alpha_1]$ is equivalent, and for $\boldsymbol{\delta}^*$ no greater than $|\Delta_m| A^C$ for any period of length $|\Delta_m|$ in the interval $[\alpha_1,T]$.

Third, for any $\Delta_m>0$ (i.e., transmitting later than $\boldsymbol{\delta}^*$) policy $\boldsymbol{\delta}'$ incurs a total penalty no less than $|\Delta_m|A^C$ at a time period after $\alpha^*_m$ since the penalty function is monotonically increasing. For any $\Delta_m<0$ (i.e., transmitting earlier than $\boldsymbol{\delta}^*$) policy $\boldsymbol{\delta}'$ incurs a penalty after $\alpha'_m$ equivalent to the penalty incurred by $\boldsymbol{\delta}^*$ after $\alpha^*_m$; yet since $T$  and $N$ are finite there exists a period of length $|\Delta_m|$ for which $\boldsymbol{\delta}'$ incurs a penalty greater than $|\Delta_m|A^C$. 
\end{proof}

Essentially, any deviation from the optimal policy $\boldsymbol{\delta}^*$ creates a temporal shift of the penalty function. Since the penalty function is non-negative and positive monotonic, a larger penalty is incurred. Fig. \ref{FigThm} illustrates possible values of $\Delta_m$ as described in the previous proof. 

\begin{figure}[ht]
\centering
\includegraphics[width=0.85\columnwidth]{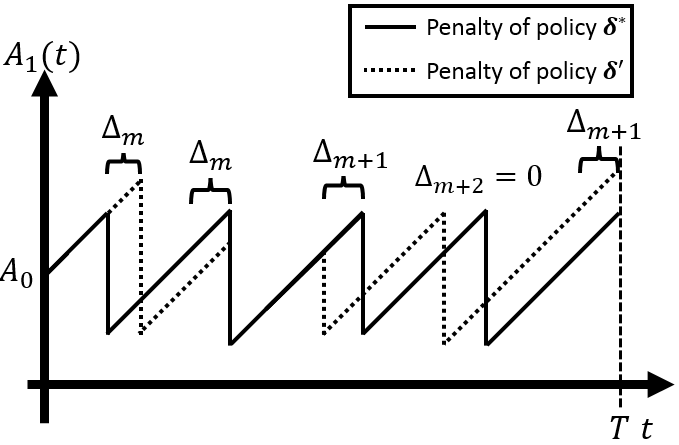}
\caption{Example of possible differences between $\boldsymbol{\delta}^*$ and $\boldsymbol{\delta}'$. Updates $m$ and $m+1$ cause a greater penalty in $\boldsymbol{\delta}'$, while update $m+2$ does not by itself cause a penalty difference.}

\label{FigThm}
\end{figure}

Construction of $\boldsymbol{\delta}^*$ provides one equation for each of the $N$ departure times $\delta_i$ with $N+1$ variables, i.e., the $N$ departure times plus the critical age $A^C$. The additional equation necessary to make the system solvable comes from realizing that at time $T$ the age should also not exceed $A^C$. Thus a critical policy $\boldsymbol{\delta}^*$ satisfies the following:

\begin{equation}
A_k^-(\delta_i^* + E[d_i], \boldsymbol{\delta}^*) = A^*~\forall~i \text{ and } A_k^-(T, \boldsymbol{\delta}^*)=A^*,
\end{equation}

\noindent 
from which one can obtain the value of
\begin{equation}
A^* = C_k \left(\frac{A_0 +T+\sum_{i=1}^N E[d_i]}{N+1}\right)^{k}.
\label{EqAC}
\end{equation}
\noindent Furthermore, one can obtain 
\begin{equation}
\delta_N^* = \frac{NT - A_0 - \sum_{i=1}^{N}E[d_i]}{N+1}
\label{eqLastN}
\end{equation}
\noindent and then recursively solve for the remaining $N-1$ values. Finally, the average total penalty incurred by policy $\boldsymbol{\delta}^*$ is \begin{equation}
\overline{\mathcal{S}_k}(\boldsymbol{\delta^*})=\sum_{i=0}^{N} \frac{C_k\left[\left(\delta^*_{i+1} + E[d_{i+1}] - \delta^*_{i} \right)^{k+1}-E[d_i]^{k+1}\right]}{(k+1)} ,
\label{AvgLinPen}
 \end{equation}\noindent with $\delta^*_0=d_0=-A_0^{1/k}, d_{N+1}=0,$ and $\delta^*_{N+1}=T$ defined for completeness and compactness.

For large values of $A_0$ or $E[d_i]$, e.g., $A_0 > NT - \sum_{i=1}^N E[d_i]$, optimal departure times may become negative, cf. (\ref{eqLastN}). In such cases, the first update is $\delta_1=0$ and the problem is recast with the remaining $N-1$ departures. While optimal departure times for any value of $k$ are equivalent, the critical age of the policy is distinct.

Augmenting a policy with additional \emph{useful} updates decreases the total age penalty. Yet, reordering update requests can turn a useful update into a wasteful update. In the simple case when all $N$ update requests experience statistically equivalent delays, the time between any two consecutive departures is constant. Formally, if  $E[d_i]=E[d_j]~\forall~i,j$ then $\delta^*_i - \delta^*_{i-1} = \delta^*_j - \delta^*_{j-1}~\forall~i,j$. Meaning, if all updates are transmitted over the same stationary wireless channel, then a constant time period between update requests is optimal and reordering updates does not impact the solution of (\ref{OptProbL}). 

Now, consider only the case when $\exists~i,j:E[d_i] \neq E[d_j]$. First, if $k=1$ and $A_0=0$ then any achievable value of $A^C$ by a policy $\boldsymbol{\delta}^*$ is achievable by any reordering of the update requests with a different policy. This follows from (\ref{EqAC}) being independent of $\boldsymbol{\delta}^*$ and $A_0=0$. Second, if $A_0 >> 0$ the first departure may be forcefully fixed to $\delta_1=0$ to mitigate the penalty as soon as possible. Intuitively, when reordering is possible and a value of $A_0$ is large enough that $\delta^*_1=0$, then the first update should be requested from the source with minimum delay, i.e., reorder such that $\arg \min_i E[d_i]=1$. Third, if only the first element is forcefully fixed due to a large initial penalty, then the remaining $N-1$ elements can achieve equivalent average total age penalty with any reordering. Reordering, as described, does not impact $A^*$ but does impact the values of $\delta_i \in \boldsymbol{\delta}^*$.

\section{Partial Updates}
In this section, and only this section, consider that an update can be compressed to reduce transmission time and information content (i.e., lossy compression). The compressed update is a partial update, i.e., for update $i$ with transmission time $d_i$ has a corresponding partial update with transmission time $d_p$ such that $E[d_p]<E[d_i]$ and $A_k^+(\alpha_i,\boldsymbol{\delta})<A_k^+(\alpha_p,\boldsymbol{\delta})$ if $A_k^-(\alpha_i,\boldsymbol{\delta})=A_k^-(\alpha_p,\boldsymbol{\delta})$. Compare with \cite{poojary2017real} where a partial update was a lossless compressed version of a full update. 

Assume a partial update \emph{resets} the AoI to the AoI at reception of the previous useful update. Meaning, for a partial update $p \in \{1,...,N\}$ at time $t$ such that $p=u(t)$ define the previous update arrival time as $t'=\lim_{\epsilon \to 0} \alpha_{u(t-\epsilon)}$ with $\alpha_0=0$, and $A_k^+(\alpha_p,\boldsymbol{\delta}):=A_k^+(t',\boldsymbol{\delta})$. Thus, the partial update provides no \emph{fresher} information than that of the previous update. Such a model for partial update inherently favors the use of partial updates over a full update whenever the expected delay increases the AoI penalty no greater than the penalty at the previous update arrival.

An optimal critical policy (cf. Section III) using only partial updates, i.e., $\boldsymbol{\delta}_p$, can lead to a total penalty that is independent of transmission times. E.g., with $k=1$ the optimal critical policy creates a sawtooth wave of $N+1$ peaks that go from $A_0$ up to $A_0+C_k\frac{T}{N+1}$ thus obtaining a total penalty of
\begin{equation}
S_k(\boldsymbol{\delta}_p)=\frac{C_kT^2}{2(N+1)}+A_0 T.
\label{SumPartialN}
\end{equation}
\noindent Note the lack of dependence of (\ref{SumPartialN}) on the update delay, meaning that reduced delay of a partial update does not impact the performance. When the delay of full updates is $d_i \leq \frac{A_o}{C}~\forall~i$, then a policy of only full updates will not incur a larger total linear penalty than a policy of only partial updates. 

\section{Impact on Second Order Statistics}
Fixing $N$ and increasing $T$ leads to an increase in (\ref{OptProbL}), while fixing $T$ and increasing $N$ can decrease (\ref{OptProbL}). Fixing the ratio $\frac{N}{T}$ allows for arbitrary, yet equal, growth in $N$ and $T$ which maintains equivalent average resource consumption. From (\ref{EqAC}) one can infer that increasing $N$ and $T$ at the same rate will drive $A^*$ towards $C_k \left(\frac{T}{N}+\overline{E[d_i]} \right)^{1/k}$ where $\overline{E[d_i]}$ is the arithmetic mean of expected delays.

\begin{figure}
\centering
  \includegraphics[width=.8\linewidth]{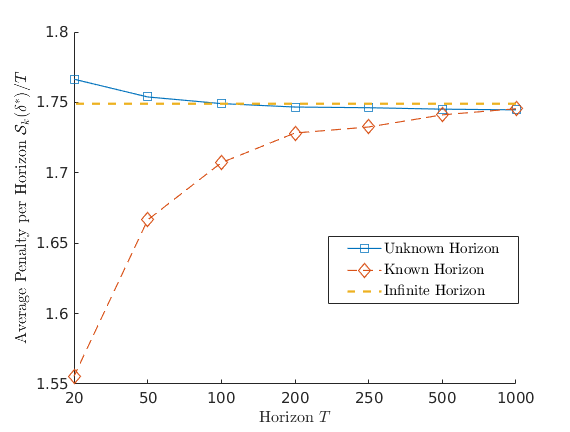}
  \caption{Avg. Penalty per Horizon $\mathcal{S}_k(\boldsymbol{\delta}^*)/T$ as a function of $T$. t $T>100$, the Unknown Horizon outpreforms the Infinite Horizon, and at $T>250$, the scenarios are comparable.}
\label{CompAvg}
\end{figure}

From (\ref{AvgLinPen}), note that increasing $N$ and $T$ by equal amounts leads to an increase in the total penalty $\mathcal{S}_k(\boldsymbol{\delta}^*)$. Adding to the growth in average penalty is the negative impact on the variance of the penalty experienced over many realizations, defined as $\sigma_{\mathcal{S}_k(\boldsymbol{\delta}^*)}^2$. With (\ref{AvgLinPen}) upper and lower bounded by $TA^*$ and $\min (A_0, E[d_i]~\forall~i)$, respectively, one can use Bhatia-Davis' inequality \cite{bhatia2000better} to upper bound the variance as
\begin{equation}
\sigma_{\mathcal{S}_k(\boldsymbol{\delta}^*)}^2 < \left(TA^*-\overline{\mathcal{S}_k}(\boldsymbol{\delta^*})\right)\left(\overline{\mathcal{S}_k}(\boldsymbol{\delta^*}) -\min (A_0, E[d_i]~\forall~i) \right)
\label{secondord}
\end{equation}
\noindent where various refinements can be done if additional assumptions are made \cite{agarwal2005survey}. Now, if increasing $N$ and $T$ is done such that the additional update requests do not reduce the value of $\min (A_0, E[di]~\forall~i)$, then one can expect the variance of the total penalty incurred to increase with $N$ and $T$. Therefore, a shorter horizon (i.e., smaller $T$) can lead to smaller values of $\sigma^2$, and ultimately more stable performance from the system.

\begin{figure}
\centering
\includegraphics[width=.8\linewidth]{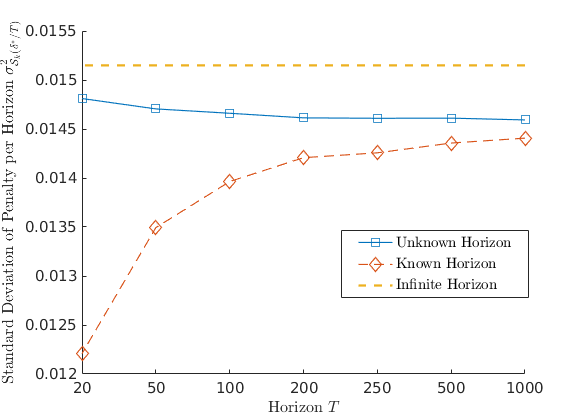}
  \caption{Standard deviation of $\sigma_{\mathcal{S}_k(\boldsymbol{\delta}^*)}^2$ as a function of $T$. Infinite Horizon underperforms among evaluated scenarios.} 
\label{CompVar}
\end{figure}

The elapsed time before the system is set to \emph{initial conditions} defines the horizon $T$. Defining \emph{initial conditions} is where system engineering comes into play. A system where delay is temporally stationary has a horizon $T=\infty$, while a system where the estimated statistics on delay are applicable for a finite amount of time before they need to be re-estimated has a horizon $T<\infty$. Horizon selection impacts second order statistics, as suggested by (\ref{secondord}) and analyzed in the sequel. 

\section{Simulation Results}
Simulation parameters are set to $\frac{N}{T}=0.4, A_0 = 0, k=1, C_k=1,$ and $d_i \sim U(0,1)$. Three scenarios are evaluated. Infinite Horizon solves (\ref{OptProbL}) with $T=\infty$. Known Horizon uses $T$ as the horizon. Unknown Horizon concatenates periods of length $\frac{1000}{T}$, using $\frac{N}{T}$ updates per period, and the penalty at the end of one period is the initial age of the next period. 

Fig. \ref{CompAvg} shows $E[\frac{\mathcal{S}_k(\boldsymbol{\delta}^*)}{T}]$ as a function of $T$, where a lower value on the vertical axis implies better performance. Intuitively, using larger $T$ moves all scenarios towards the Infinite Horizon scenario. Note the existence of a trade-off point ($T < 100$) where assuming an infinite horizon is a better strategy than estimating an unknown horizon. The Unknown Horizon scenario suffers from the initial age carried over from previous periods, thus its performance improves with increase in $T$. When the horizon is known, a smaller horizon allows for a finer control of resources, thus a smaller penalty. In Fig. \ref{CompVar} the standard deviation of $\frac{\mathcal{S}_k(\boldsymbol{\delta}^*)}{T}$ is shown as a function of $T$, where a lower value implies better performance. Assuming an infinite horizon leads to a larger standard deviation, and a greater variation in performance.

\section{Conclusion}

An update request policy that minimizes the average information age penalty of a resource constrained monitoring system with stochastic update delay has been presented. The optimal departure times have been obtained by solving a linear system of equations which grows with the available number of updates. Results have also outlined that using partial updates incurs a total age independent of delay, but is only beneficial when the delay reduction outweighs the increase in age. 

\bibliographystyle{IEEEtran}
\bibliography{AoI.bib}

\begin{thebibliography}{10}
\providecommand{\url}[1]{#1}
\csname url@samestyle\endcsname
\providecommand{\newblock}{\relax}
\providecommand{\bibinfo}[2]{#2}
\providecommand{\BIBentrySTDinterwordspacing}{\spaceskip=0pt\relax}
\providecommand{\BIBentryALTinterwordstretchfactor}{4}
\providecommand{\BIBentryALTinterwordspacing}{\spaceskip=\fontdimen2\font plus
\BIBentryALTinterwordstretchfactor\fontdimen3\font minus
  \fontdimen4\font\relax}
\providecommand{\BIBforeignlanguage}[2]{{%
\expandafter\ifx\csname l@#1\endcsname\relax
\typeout{** WARNING: IEEEtran.bst: No hyphenation pattern has been}%
\typeout{** loaded for the language `#1'. Using the pattern for}%
\typeout{** the default language instead.}%
\else
\language=\csname l@#1\endcsname
\fi
#2}}
\providecommand{\BIBdecl}{\relax}
\BIBdecl

\bibitem{simsek20165g}
M.~Simsek, A.~Aijaz, M.~Dohler, J.~Sachs, and G.~Fettweis, ``{5G-Enabled
  Tactile Internet},'' \emph{IEEE J. on Sel. Areas in Commun.}, vol.~34, no.~3,
  pp. 460--473, 2016.

\bibitem{kaul2011minimizing}
S.~Kaul, M.~Gruteser, V.~Rai, and J.~Kenney, ``{Minimizing Age of Information
  in Vehicular Networks},'' in \emph{Proc. Conf. on Sensor, Mesh and Ad Hoc
  Commun. and Netw.}\hskip 1em plus 0.5em minus 0.4em\relax IEEE, 2011.

\bibitem{kadota2016minimizing}
I.~Kadota, E.~Uysal-Biyikoglu, R.~Singh, and E.~Modiano, ``{Minimizing the Age
  of Information in Broadcast Wireless Networks},'' in \emph{IEEE Proc. 54th
  Annu. Allerton Conf. on Commun., Control, and Computing}, 2016.

\bibitem{sun2017update}
Y.~Sun, E.~Uysal-Biyikoglu, R.~D. Yates, C.~E. Koksal, and N.~B. Shroff,
  ``{Update or Wait: How to Keep Your Data Fresh},'' \emph{IEEE Trans. Inf.
  Theory}, vol.~63, no.~11, pp. 7492--7508, 2017.

\bibitem{chen2016age}
K.~Chen and L.~Huang, ``{Age-of-Information in the Presence of Error},'' in
  \emph{IEEE Int. Symp. on Inf. Theory}, 2016, pp. 2579--2583.

\bibitem{ceran2017average}
E.~T. Ceran, D.~Gunduz, and A.~Gyorgy, ``{Average Age of Information with
  Hybrid ARQ under a Resource Constraint},'' \emph{arXiv preprint
  arXiv:1710.04971}, 2017.

\bibitem{hsu2017age}
Y.-P. Hsu, E.~Modiano, and L.~Duan, ``{Age of Information: Design and Analysis
  of Optimal Scheduling Algorithms},'' in \emph{IEEE Int. Symp. on Inf.
  Theory}, 2017, pp. 561--565.

\bibitem{kosta2017age}
A.~Kosta, N.~Pappas, A.~Ephremides, and V.~Angelakis, ``{Age and Value of
  Information: Non-linear Age Case},'' in \emph{IEEE Int. Symp. on Inf.
  Theory}, 2017, pp. 326--330.

\bibitem{poojary2017real}
S.~Poojary, S.~Bhambay, and P.~Parag, ``{Real-time Status Updates for
  Correlated Source},'' in \emph{Inf. Theory Workshop}.\hskip 1em plus 0.5em
  minus 0.4em\relax IEEE, 2017.

\bibitem{chakravorty2015distortion}
J.~Chakravorty and A.~Mahajan, ``Distortion-transmission trade-off in real-time
  transmission of gauss-markov sources,'' in \emph{Intl. Symp. on Inf.
  Theory}.\hskip 1em plus 0.5em minus 0.4em\relax IEEE, 2015, pp. 1387--1391.

\bibitem{bastopcu2019age}
M.~Bastopcu and S.~Ulukus, ``Age of information for updates with distortion,''
  \emph{arXiv preprint arXiv:1904.10444}, 2019.

\bibitem{sun2018information}
Y.~Sun and B.~Cyr, ``Information aging through queues: A mutual information
  perspective,'' in \emph{Intl. Workshop on Signal Process. Advances in
  Wireless Commun.}\hskip 1em plus 0.5em minus 0.4em\relax IEEE, 2018, pp.
  1--5.

\bibitem{ayan2019age}
O.~Ayan, M.~Vilgelm, M.~Kl{\"u}gel, S.~Hirche, and W.~Kellerer,
  ``Age-of-information vs. value-of-information scheduling for cellular
  networked control systems,'' in \emph{Proc. ACM/IEEE Intl. Conf. on
  Cyber-Physical Syst.}, 2019, pp. 109--117.

\bibitem{bhatia2000better}
R.~Bhatia and C.~Davis, ``{A Better Bound on the Wariance},'' \emph{The Amer.
  Math. Monthly}, vol. 107, no.~4, 2000.

\bibitem{agarwal2005survey}
R.~Agarwal, N.~S. Barnett, P.~Cerone, and S.~S. Dragomir, ``A survey on some
  inequalities for expectation and variance,'' \emph{Comput. \& Math. with
  Appl.}, vol.~49, no. 2-3, pp. 429--480, 2005.

\end{thebibliography}

\end{document}